\theoremstyle{plain}
\newtheorem{proposition}{Proposition}
\newtheorem{corollary}{Corollary}
\theoremstyle{definition}
\newtheorem{definition}{Definition}
\newtheorem{assumption}{Assumption}
\newtheorem{example}{Example}
\theoremstyle{remark}
\crefname{observation}{Observation}{Observations}
\crefname{theorem}{Theorem}{Theorems}
\crefname{lemma}{Lemma}{Lemmata}
\crefname{proposition}{Proposition}{Propositions}
\crefname{corollary}{Corollary}{Corollaries}
\crefname{definition}{Definition}{Definitions}
\crefname{assumption}{Assumption}{Assumptions}
\crefname{section}{Section}{Sections}
\crefname{figure}{Figure}{Figures}
\crefname{axiom}{Axiom}{Axioms}
\crefname{modification}{Modification}{Modifications}
\crefname{example}{Example}{Examples}
\crefname{appendix}{Appendix}{Appendices}
\pgfplotsset{compat=newest}
\definecolor{Vermilion}{RGB}{213,94,0}
\definecolor{BluishGreen}{RGB}{0,158,115}
\definecolor{Cadmium}{RGB}{227, 0, 34}
\definecolor{ReddishPurple}{RGB}{204,121,167}
\definecolor{Orange}{RGB}{230,159,0}
\definecolor{Yellow}{RGB}{240,228,66}
\definecolor{SkyBlue}{RGB}{86,180,233}
\definecolor{Blue}{RGB}{0,114,178}
\DeclareFontFamily{U}{mathx}{\hyphenchar\font45}
\DeclareFontShape{U}{mathx}{m}{n}{
      <5> <6> <7> <8> <9> <10>
      <10.95> <12> <14.4> <17.28> <20.74> <24.88>
      mathx10
      }{}
\DeclareSymbolFont{mathx}{U}{mathx}{m}{n}
\DeclareMathAccent{\widecheck}{0}{mathx}{"71}
\DeclareMathAccent{\wideparen}{0}{mathx}{"75}
\setlist{noitemsep,topsep=1pt}
\title{Mechanism Design with Informational Punishment\thanks{We thank the advisory editor and two anonymous referees for their help in improving the paper. Johannes Schneider gratefully acknowledges financial support from the German Research Foundation (DFG) through CRC TR 224 (Project B03), Agencia Estatal de Investigación (grant PID2019-111095RB-I00 and grant PID2020-118022GB-I00), Ministerio Economía y Competitividad (grant ECO2017-87769-P), and Comunidad de Madrid (grant MAD-ECON-POL-CM H2019/HUM-5891).}}
\author{Benjamin Balzer\thanks{University of Technology Sydney, benjamin.balzer@uts.edu.au} \and Johannes Schneider\thanks{Universidad Carlos III de Madrid, jschneid@eco.uc3m.es}}
\begin{document}
\maketitle
	\begin{abstract}
		 We introduce \emph{informational punishment} to the design of mechanisms that compete with an exogenous status quo mechanism: Players can send garbled public messages  with some delay and others cannot commit to ignoring them. Optimal informational punishment ensures that  full participation is without loss, even if any single player can publicly enforce the status quo mechanism. Informational punishment permits using a standard revelation principle, is independent of the mechanism designer's objective, and operates exclusively off the equilibrium path. It is robust to refinements and applies in informed-principal settings. We provide conditions that make it robust to opportunistic signal designers.
	\end{abstract}
\newpage
\section{Introduction} 
\label{sec:introduction}
 
Economics aims at improving the efficiency of institutions to govern strategic parties. If those parties have asymmetric information, the first welfare theorem fails and we may fail to attain efficiency through any institution. In these cases, mechanism design becomes a powerful tool. Mechanism design characterizes all outcomes that parties' strategic incentives permit. Per the revelation principle, a simple direct revelation mechanism is enough to determine the best results we can hope for under any institution. In reality, however, there may be an additional friction: new institutions often replace status quo mechanisms.

Changing to the new institution may require consent: if a party vetoes the proposal, the status quo prevails. Parties follow similar strategic incentives in their decision to veto the new institution to those at play when they decide on their behavior within any given institution. Vetoing a proposal can be strategic, too. A party may publicly veto the mechanism only to signal her private information.

This paper considers a setting in which parties may veto a proposed mechanism and disclose their veto to others. If the proposal fails, parties revert to the play of a default game. The optimal mechanism in such settings may involve on-path rejections \citep{Celik:11}. Thus, it might not be optimal to offer a mechanism attractive to everyone. As a result, complexity increases, and the mechanism-design approach loses part of its power. We show, however, that if parties can store information and commit to releasing it at a later date, assuming full participation is without loss, and we can readily apply the tools of mechanism design. That is, the optimal mechanism is one that is accepted by all parties and types. We refer to this technology as \emph{informational punishment}. The information is stored and released in the event of a deviation. Its purpose is to punish the deviator by the release.

Informational punishment is a simple yet powerful tool. It requires only that every party has access to a simpel signaling device with the following feature: the device can conceal the information for some time before releasing a garbled version of it. The threat to release information later suffices to discipline others and ensures full participation. Because informational punishment operates only off the equilibrium path, it does not interfere with the design of the mechanism itself. Moreover, a decentralized implementation through parties themselves is straightforward. Informational punishment is robust both to equilibrium refinements and restrictions on the space of available mechanisms. In addition, it applies to informed-principal problems and is often immune to a designer who suffers from informational opportunism.

Examples of our environment abound. Parties in a legal conflict can coordinate to settle through an arbitration mechanism. However, each party can unilaterally enforce the default game of a trial. Political parties can ease gridlock through a bargaining procedure. However, they might refuse to cooperate and instead enter a stalemate until the next election. Firms can work together to determine the standard of the industry. However, they can also launch a standards war. Countries can negotiate free trade agreements. However, if one government refuses to engage in those negotiations, all countries will revert to the World Trade Organization trade regime.

In all of these cases, vetoing the mechanisms may signal a party's ability in the default game. The other parties interpret that signal and adjust their behavior. The change in behavior influences the default game's outcomes. Moreover, if vetoing signals information, participating signals information too. Thus, if we cannot rule out on-path rejections of the mechanism, we need to solve all combinations of vetoing and participating and the associated outcomes of the default game to compute the optimal mechanism---a cumbersome computational problem. 

Informational punishment is readily available in these cases. Disputing parties can signal their private information through their attorneys, who they instruct to release parts of the information only if the settlement fails; political parties can leak information to journalists who require time to fact-check it; firms can publish beta versions or announce products that turn out to be vaporware; and countries can outsource an information campaign to a  state bureaucracy that needs time to organize it. All these options represent a form of decentralized informational punishment that meets the requirements for commitment to store partial information that will be released at a later date.

Adding informational punishment relaxes the computational burden to compute the optimal mechanism. It restores full participation on the equilibrium path yet only affects the players' outside options. Incentives inside the mechanism remain unchanged. Thus, we relax participation constraints without affecting incentive constraints. Moreover, the off-path event of informational punishment solves a specific, well-defined information-design problem: min-max the deviator's continuation payoff over Bayes-plausible information structures.

Informational punishment works because a signal realization about party $i$ has two effects. The first effect is direct and distributional: the other parties update expected payoffs because $i$'s private information is payoff relevant. The second effect is indirect and behavioral. A party's strategy is a function of the information set. Altering information alters the continuation strategy. Via equilibrium reasoning, the party's change in behavior alters the behavior of other parties too. Informational punishment exploits that channel.

\paragraph{Related Literature.} The literature on signaling through vetoes in mechanisms dates back to \citet{Cramton:95}. Like them, we consider a problem in which a proposed mechanism competes with a status quo game. In line with their model, we assume that ratification is public, so parties learn who vetoes the mechanism and who does not. Unlike them, we are not interested in only describing the mechanism space that is ratifiable by everyone.

Instead, and closer to \citet{Celik:11}, we assume perfect Bayesian equilibrium as our solution concept, take the space of mechanisms as given, and are interested in the set of outcomes that can arise in such a setting. Unlike \citet{Celik:11}'s setting, in which equilibrium rejection may be optimal, we allow parties to engage in informational punishment. We show that equilibrium rejection is of no concern when informational punishment is available. The set of outcomes that we can implement with full participation contains those we can implement with equilibrium rejection.\footnote{\citet{TAN2007383,DEQUIEDT2007302} are other examples of default games' threat to participation.}

\Citet{gerardi2007sequential,correia2017trembling} propose an alternative approach in settings with veto-constrained mechanisms. Instead of signaling information, they consider mechanisms that ``tremble.'' Even if all parties decide to participate, the mechanism breaks down with a small probability and invokes the default game. The mechanism can get arbitrarily close to the full-participation optimum through such trembling. However, these mechanisms rely on the assumption that a deviating party cannot credibly signal that it was her veto that invoked the default game and not the mechanism's tremble. Instead, we allow parties to announce their veto publicly, making trembles insufficient to overcome the veto problem. \Citet{Celik:13} use reciprocal mechanisms to circumvent the equilibrium-rejection problem.
The main difference from our approach is that parties make a public revelation on the equilibrium path through their mechanism proposal. Depending on the environment, these revelations can interfere with incentive compatibility. Because informational punishments affect only off-path events, these concerns are absent; the set of implementable outcomes nests those in \citet{Celik:13}. The reverse, however, is not the case.

Our previous work on standard-setting organizations \citep{Balzer:15} applies informational punishment outside mechanism design. In that paper, we consider a setting where the available mechanisms contain only efficient take-it-or-leave-it offers with fixed shares. In \citet{Balzer:15} informational punishment enlarges the class of environments in which full participation is feasible, yet informational punishment cannot guarantee full participation. The reason is that the space of available mechanisms is too small. In the current paper, we take a broader view and complement \citet{Balzer:15} in two ways. First, we define the minimal set of available mechanisms in the designer's toolbox such that an optimal full-participation mechanism exists. Second, we show how informational punishment simplifies the mechanism designer's task, particularly when the set of available mechanisms is large. Unlike the results in \citet{Balzer:15}, our results readily apply to various design problems. Moreover, we show that---given our minimal conditions---common restrictions on the designer's problem do not affect the power of informational punishment. Specifically, we consider equilibrium-refinement concepts \citep{Cho:87,Cramton:95,grossman1986perfect}, informed-principal problems \citep{Myerson:83informed}, and informational opportunism \citep{MartimortDequiedt15}.

\section{Setup} 
	\label{sub:setup}
	\paragraph{Players and Information Structure.} There are $N$ players, indexed by $i  \in \mathcal N:= \lbrace 1,...,   N \rbrace$. Each player has a private type $\theta_i \in  \Theta_i$, and $\Theta_i \subset \mathbb{R}$ is compact. The state $ \theta :=\{\theta_1,..., \theta_N\} \in \Theta:=\times_i \Theta_i$ is distributed according to a commonly known distribution function $I^0: \Theta \rightarrow \Delta(\Theta)$, the \emph{prior information structure}. Let $\theta_{-i} := \theta \setminus \theta_i$, and define the marginal $I^0_i(\theta_i  ):= \int_{\Theta_{-i}} I^0(\theta_i,d\theta_{-i})$ with support $supp(I^0_i) = \Theta_i$.

	An information structure $I: \Theta \rightarrow \Delta(\Theta)$ is a commonly known joint distribution over the state $\theta$. The only restriction we impose on $I$ is that it is absolutely continuous w.r.t. $I^0$; that is, $ supp(I)\subseteq supp(I^0)$. Given $\theta_i$ a player's \emph{belief} about the other players' types is the conditional distribution $I_{-i}(\theta_{-i}|\theta_i):=\frac{I(\theta_i,\theta_{-i})}{I_i(\theta_i)}$, where $I_i(\theta_i)$ is the marginal of $I$. Let $\mathcal I^0$ be the set of all information structures for which $I^0$ is an expansion. That is, $I  \in \mathcal I^0$ if and only if there exists a random variable $\widetilde \Sigma: \Theta \rightarrow \Delta(S)$ that maps types to distributions of signals such that the realization $\sigma \in S$ together with $\widetilde \Sigma$ and $I^0 $ implies $I$ via Bayes' rule.

	\paragraph{Basic Outcomes, Decision Rules, and Payoffs.} There is an exogenously given set of basic outcomes, $Z \subset \mathbb{R}^K$, with $K < \infty$. Player $i$ values the outcome $z \in Z$ according to a Bernoulli utility function, $u_i$, defined over $Z \times \Theta$.

	We represent the rules of a game by a decision rule, \[\pi: \Theta \rightarrow \Delta(Z),\] where $\Delta(Z)$ is the set of all distribution functions over the outcome space $Z$. Each rule $\pi$ is a mapping from \emph{type reports} to a distribution over outcomes represented by the distribution function $G_\pi(z|\theta)$.

	\paragraph{Status quo.} The status quo is an exogenous game of incomplete information. We assume an equilibrium in that game exists for any information structure $I \in \mathcal I^0$ and take the equilibrium selection as given.\footnote{Selecting the designer's least preferred equilibrium of the default game provides the benchmark case in which equilibrium selection poses the smallest threat to participation. However, even in such cases, full participation may not be optimal absent informational punishment \citep[see, for example,][]{Celik:11}.} For any information structure $I$, the status quo induces a decision rule $\mathring{\pi}_I$. Under $\mathring{\pi}_I$, the expected utility of a truthfully reporting player $i$ with type $\theta_i$ is

	\begin{equation}\label{IIC}
	\begin{split}
		v_i(\theta_i,I,\mathring{\pi}_I)&:= \int_{\Theta_{-i}} \int_{Z} u_i({z},\theta_i,\theta_{-i}) dG_{\mathring{\pi}_I}(z|\theta_i,\theta_{-i}) I(d\theta_{-i}|\theta_i) \notag \\&=\max_{m_i \in \Theta_i}\int_{\Theta_{-i}} \int_{Z} u_i({z}, \theta_i,\theta_{-i})  dG_{\mathring{\pi}_I}(z|m_i,\theta_{-i})I(d\theta_{-i}|\theta_i),   \qquad (I\textit{-IC})
	\end{split}
	\end{equation}
	almost everywhere conditional on $I$; that is, $\forall \theta_i \in supp(I_i)$. The second line follows because $\mathring{\pi}_I$ is incentive compatible under information structure $I$ (henceforth $I$-IC).\footnote{We employ standard revelation arguments here. In equilibrium, the strategies of all players and types are common knowledge. Thus, each type can use a proxy to whom she truthfully feeds her type to play the game. We can thus describe the outcome of every game by an incentive-compatible decision rule conditional on the initial (common knowledge) information structure. See, for example, \citet[chapter 6]{myerson_game_1991} for further discussion.} Truthful reporting is optimal for all types of players given $\mathring{\pi}_I$.

	The existence of equilibrium under $\mathcal I^0$ implies that the collection of possible status quo outcomes $\mathring{\Pi}:=\{\mathring{\pi}_I\}_{I\in \mathcal I^0}$ (with $\mathring{\pi}_I$ being $I$-IC) is well defined. 

	\paragraph{Mechanism.} The mechanism is an alternative to the status quo. Any mechanism is a game of incomplete information represented by a decision rule. The collection of decision rules is $\Pi$. Given $\pi$ and $I$, we define each player's optimal reporting strategy $m_{i,I}(\theta_i)$. We collect players' reports in $m_I(\theta)$. An equilibrium of $\pi$ implements the decision rule $\pi_I:= \pi \circ m_I: \Theta \rightarrow \Delta(Z)$ which is $I$-IC.\footnote{Although any decision rule in $\Pi$ represents a direct revelation mechanism, a truthful implementation cannot be guaranteed. Indeed, $\Pi$ is shorthand for all available game forms without loss represented by those that have type reports as players' action choices. The equilibrium play of each $\pi \in \Pi$ under $I$ then induces some $I$-IC decision rule.}

	The set of available mechanisms, $\Pi$, may be restricted by legal or institutional constraints, or particular outcomes may simply be infeasible. We require two properties.
\begin{assumption}
The set of available mechanisms is such that
	\begin{enumerate}[label=(\roman*)]
		\item $\mathring{\Pi}  \subseteq \Pi$ and
		\item $\Pi$ is closed under convex combinations; that is, if $ \pi,\pi' \in \Pi$, then for any $\lambda: \Theta \rightarrow [0,1]$  it holds that $ \lambda \pi + (1- \lambda) \pi'=:\pi^\lambda \in \Pi$.
	\end{enumerate}
\end{assumption}

	The first property implies that the mechanism can replicate the status quo for any given $I\in \mathcal{I}^0$. The second property implies that if two games (A and B) are part of the available mechanisms, so is the game in which game A is played for specific type reports and game B for the remaining type reports.

	Apart from these requirements, we do not restrict $\Pi$. Instead, we allow for both a classical mechanism-design setting and the possibility that the designer's set of mechanisms is exogenously limited. The latter includes pure mediation within the status quo.

	\paragraph{Informational Punishment.} We assume all players have access to a signaling device $\Sigma$. The N-dimensional random variable $\Sigma: \Theta \rightarrow S$ maps type reports to realizations in signal space $S\equiv S_1 \times S_2 \times ..\times S_N$ with $|S_i| \geq |\Theta_i|$. We denote the realization of $\Sigma$ by $\sigma \in S$ and that of element $\Sigma_i$ by $\sigma_i \in S_i$. 

	\paragraph{Timing.} First, players learn their types and observe $(\pi,\Sigma)$. Second, they simultaneously send a message $m_i^\Sigma$ to $\Sigma$. Third, players simultaneously decide whether to veto the mechanism. If at least one player vetoes the mechanism, the set $V$ of vetoing players becomes common knowledge, and the signal realizations $\sigma$ become public. Players use that information to update to an information structure $I^{V,\sigma} \in \mathcal I^0$ and the status quo implements $\mathring{\pi}_{I^{V,\sigma}} $. If players unanimously ratify the mechanism, they report $m_i$ to the mechanism that implements $\pi$.

	\paragraph{Solution Concept and Veto Beliefs.} We consider all mechanism-signaling device combinations $(\pi,\Sigma)$ implementable as the grand game's perfect Bayesian equilibrium (PBE) using the definition from \citet{Fudenberg:88}.

	We use \emph{veto information structures}, $I^V$, to refer to the information structures that arise \emph{after} an observed veto but \emph{before} the realization $\sigma$. PBE implies that $I^V(\theta_{-i}|\theta_i) = I^{V\setminus i}(\theta_{-i}|\theta_i)$ for any $i \in V$ and $I^V(\theta_{-i}|\theta_i) = I^{V \cup i}(\theta_{-i}|\theta_i)$ for any $i \notin V$. In addition, all but first-node off-path beliefs about deviators follow Bayes' rule. The remaining off-path beliefs are arbitrary.\footnote{In our setting, a player is observed to deviate at most once. Off-path belief cascades (see \citet{sugaya2017revelation}) are thus not possible in our model.}

\subsection{Discussion} 
\label{sub:discussion_of_the_modeling_assumptions}

	Before moving to the analysis, we pause to discuss the key elements of our rather abstract model.

    \paragraph{Timing and Information Flow.} It is useful to assume---for the moment---that the mechanism designer only selects $\pi$. In addition, there is a signal designer who selects $\Sigma$. At the beginning of the game, each player then observes the choice of both. Thereafter the player faces the following sequential decisions:
    \begin{enumerate}
        \item What report to submit to $\Sigma$
        \item Whether to ratify or veto $\pi$
        \item Which action to take in the continuation game that results from the collective decision in stage 2
    \end{enumerate} 
    Our assumption is that $\sigma$, the realization of the signaling function $\Sigma$, occurs after stage 2 together with the revelation of each player's participation choice.

    The above baseline setting nests the situation \emph{without} informational punishment---that is, $\Sigma$ as a constant function---as a special case. There, players can only communicate before playing the game through their decision in stage 2. 

    Informational punishment, in contrast, equips players with an additional communication device. That device uses the information provided by the players and commits to releasing a (garbled) version of it after stage 2. That is, while each player decides about her message before play, that message becomes public during the game play. Our setting relies on the following two ideas. Often, it is reasonable to assume that players can publicly verify that they do not intend to take part in a mechanism or, equivalently, that they did not communicate with the mechanism ex ante. Moreoever, it is also reasonable to assume that players  have access to devices that send messages with some delay while they cannot commit to ignoring useful information presented to them.\footnote{Various variants of the above basic setting are possible without any effect on the presented results, including simultaneous design of a grand mechanism $(\pi,\Sigma)$, individual signals $\Sigma_i$, simultaneous submission of veto decisions, and information, etc.}

    \paragraph{Vetoes and Informational Punishment.} Formally, informational punishment maps type reports to public messages. Importantly, informational punishment works outside the veto constraint. That is, we assume players \emph{can credibly verify} that they veto the mechanism through some commitment device but \emph{cannot credibly commit ex ante} to ignoring public information. The first assumption is the key premise for participation constraints to be a nontrivial friction in the design of the mechanism \citep[see][, chapter 6 for a comprehensive discussion of the participation issue absent this assumption]{myerson_game_1991}. The second assumption is weak in a world in which access to some information channels is almost costless. Economically, the assumption that vetoes are public means that parties can visibly choose not to participate in settlements or peace negotiations, not to become a member of a newly founded institution, or to refuse to take actions necessary to ratify a proposed mechanism. The assumption that players cannot ignore information, in turn, implies that if it is beneficial at the interim stage,\footnote{If it \emph{were} beneficial for a player to ignore information, it would be commonly known (in equilibrium) and the information would become nonstrategic. But then, that player would have an incentive to access the information.} players will listen to press conferences held by negotiators, statements made by the institutions, or information leaked on the internet.\footnote{In \citet{Balzer:15}, we provide further details on that potential extension. There we also show that business practices such as releasing beta versions of products can be used to provide informational punishment.} 
        
    In addition, we assume that players can send information garbles with some delay. In fact, although our model formally assumes the existence of the signaling device $\Sigma$ before the start of the game, we can equivalently allow for a decentralized device $\Sigma_i$ such that each player, privately informed about her type, creates a signaling device on her own. The reason is that in equilibrium, no player expects that signals are relevant with positive probability. Instead, signals act purely as a disciplining device. Moreover, in Section 3.2 we state sufficient conditions such that it is without loss that players choose their devices even after they learn who has vetoed the mechanism.
        
    We now turn to the assumptions we impose on the mechanism space $\Pi$ before discussing our interpretation of informational punishment.

    \paragraph{Mechanism Space.} Our approach separates informational punishment from the design of the mechanism in the classical sense. That is, informational punishment simply augments a given set of mechanisms. Indeed, our preferred interpretation is to think of the designer(s) of $\Sigma$ as distinct from the designer of $\pi$, although they do not have to be.

	Classical mechanism design would assume that the set of available mechanisms, $\Pi$, contains every resource-feasible decision rule $\pi$. In most real-world applications in which parties are offered an alternative to a status quo mechanism, however, institutional constraints limit the designer's potential offers. We want to emphasize that these constraints do not affect our results as long as Assumptions (i) and (ii) hold. 

	Combined, Assumptions (i) and (ii) ensure that the designer is not exogenously restricted to a smaller set of outcomes than the default game (Assumption (i)) nor forced to select a one-size-fits-all game for all type vectors (Assumption (ii)). A direct consequence of Assumption (i) is that the designer can, in principle, induce any communication equilibrium  of the default game by acting as a pure mediator between players \citep{Forges:93,myerson_game_1991}. A consequence of Assumption (ii) is that the designer can act as a switch that for some realizations of players' types $\theta$ imposes game $A$. For other realizations, however, the designer imposes game $B$. In particular, Assumption (ii) allows the designer to sometimes revert players to the default game (or a clone thereof) \emph{inside the mechanism}.\footnote{See \citet{Balzer:15a} for an example in which the \emph{optimal} mechanism takes exactly this structure. In that setting, the mechanism space is unbounded but by assumption the participation concerns we are dealing with in this paper are absent.}

	The two assumptions together empower the mechanism designer to propose real alternatives to the status quo for \emph{some} occasions without the need to waive the default game's structure for \emph{any} potential continuation game.

	\paragraph{Updating and Information Structure.} Finally, we comment on our notational choice to represent information structures and relate them to the perhaps more familiar direct updating notation. Our formulation of information structures as a set of (conditional) distributions over type vectors captures the higher-order common knowledge that players hold at each stage of the game. By the properties of PBE, players agree at some hierarchy level about the distribution of types. However, the conditional distribution players hold may differ depending on their private information and so may some of the higher-order beliefs they hold. A simple special case to familiarize oneself with our notation is to assume that each player's type is distributed identically and independently over a discrete set $\Theta_i$ and the probability of being type $\theta_i$ is $p(\theta_i)$. Then $I_i^0(\theta_i)=p(\theta_i)$ for all $i$. Moreover, for type profile $\theta_{-i}=\theta \setminus \theta_i$, \[I_{-i}^0(\theta_{-i}|\theta_i)= \frac{I^0(\theta_{-i},\theta_i)}{I_i^0(\theta_i)}= \frac{p(\theta_i)\prod\limits_{\theta_{j}\in \theta_{-i}} p(\theta_{j})}{p(\theta_i)}=\prod_{\theta_{j}\in \theta_{-i}} p(\theta_{j}).\] 
		In this setting, if player $i$ decides to veto the mechanism off the equilibrium path, the no-signaling-what-you-don't-know condition of PBE requires that player $i$'s belief at the beginning of the continuation game is identical to her prior belief. That is,
			\[I_{-i}^{V=i}(\theta_{-i}|\theta_i)=I^{V=\emptyset}_{-i}(\theta_{-i}|\theta_i)=I_{-i}^0(\theta_{-i}|\theta_i)=\prod_{\theta_j \in \theta_{-i} } p(\theta_{j}).\]

	All non-deviating players $ j \neq i$ form an arbitrary off-path belief about $i$, $q_i(\theta_i)$. Moreover, the remaining information structure satisfies for any $\theta_{-j}=\theta \setminus \theta_j$
			\[I_{-j}^{V=i}(\theta_{-j}|\theta_j)=I^{V=\{i,j\}}_{-j}(\theta_{-j}|\theta_j)=I_{-j}^0(\theta_{-j}|\theta_j)=  q_i(\theta_i)\prod_{\theta_k \in \theta_{-j}\setminus \theta_i}  p(\theta_{k}).\]

    With independent types, $q_i(\theta_i)$ is completely arbitrary. 
    More generally, when also allowing for correlations between types, PBE requires that the common-knowledge information structure $I^V$ after $V$ deviated has to ``make sense'' under the prior $I^0$; or using the notation of \citet{Bergemann:13}, $I^0$ must be an \emph{expansion} of $I^V$, which means $I^V \in \mathcal I^0$.

	After that update, players receive the signal realization, $\sigma$, knowing $\Sigma$. Again, we have to follow an updating procedure on our way from $I^V$ to $I^{V,\sigma}$. Formally, Bayes plausibility implies that $I^V$ is an expansion of any $I^{V,\sigma}$ for all realizations $\sigma$. In addition, for realizations on the equilibrium path, the probabilities with which the various information structures occur are such that the distribution of $I^{V,\sigma}$ constitutes a mean-preserving spread of $I^V$. That is, under signal realization $\sigma$, and a single veto by player $i$, any player $j$ updates as follows on a type profile $\theta_{-j} \setminus \theta_j$,  
		\[ \begin{split}
		    I_{-j}^{V=i,\sigma}(\theta_{-j}|\theta_j) &= \frac{Pr(\sigma|\theta_j,\theta_{-j},\Sigma )}{Pr(\sigma|\theta_j, \Sigma)}\frac{I^0(\theta_j,\theta_{-j}|\theta_i)}{I_j^0(\theta_j|\theta_i)}q_i(\theta_i)\\ &=\frac{Pr(\sigma|\theta_j,\theta_{-j},\Sigma )}{Pr(\sigma|\theta_j, \Sigma)}q_i(\theta_i){\prod\limits_{\theta_k \in \theta_{-j} \setminus \theta_i} p(\theta_k) } \\&= \frac{Pr(\sigma|\theta_j,\theta_{-j},\Sigma) }{Pr(\sigma|\theta_j,\Sigma) }I_{-j}^{V=i}(\theta_{-j}|\theta_j),
\end{split} \]

	where $\Sigma$ affects only $Pr(\sigma|\cdot)$. Importantly, the belief on a deviator, $q_i(\theta_i)$, is unaffected by $Pr(\sigma|\cdot)$ because players cannot rule out double deviations and instead form \emph{one} arbitrary off-path belief. Finally, averaging information structures $I_{-j}^{V=i,\sigma}$ using $\Sigma$ implies $I_{-j}^{V=i}$ since (on-path) beliefs are martingales.

	While the description of the potential updating paths is straightforward in our special case, correlation among types makes the notation in explicit terms cumbersome. Fortunately, when using the concepts of \citet{Bergemann:13}, all that matters for our results \cref{prop:full_participation,prop:refinement,prop:Inscrutability} is that the information structures $I^{V,\sigma}, I^V$, and $I^0$ are related through the expansion notion and that the Bayes plausibility constraint holds on the equilibrium path---a  well-defined property. For our result on informational opportunism, \cref{prop:IPBP}, we require full support of the correlated type space. Without it, the concept of expansions is too weak when allowing for informational opportunism.\footnote{See the section on correlated types in \citet{Fudenberg:88}.}

\section{Analysis} 
\label{sec:analysis}

\subsection{Main Result} 
\label{sub:main_result}

	An optimal full-participation mechanism might not exist absent informational punishment, even if $\Pi$ is arbitrarily large. To see this, take a situation in which no player vetoes the mechanism and check for potential deviations. A deviator $j$ who vetoes in that situation guarantees herself the prior belief about the other players' types via a deviation. At the same time, the deviator is most punished by the worst off-path belief assigned to her. If $j$'s outside option exhibits concavities in the information structure, it may be beneficial to make another player, $i$, veto the mechanism to relax $j$'s participation constraint.

	To illustrate why all unanimously ratified mechanisms might be suboptimal, assume that informational punishment is unavailable. That is, the signal $\Sigma$ is constant and thus provides no information. Further, suppose that only \emph{some types} of player $i$ are expected to veto the mechanism. These types then play the status quo game against the prior distribution of the remaining players, whereas players use equilibrium reasoning to update on player $i$'s type, resulting in an information structure $I^{V=i}$ conditional on a veto and an information structure $I^{V=\emptyset}$ conditional on no veto. Veto decisions are public. Therefore, player $j$ who vetoes the mechanism off the equilibrium path will learn in addition whether $i$ vetoed. PBE \emph{requires} $j$ to update about $i$. The outside option of $j$ changes with the on-path veto of $i$ compared to any full-participation mechanism. That change, in turn, affects the set of implementable outcomes. If $i$'s on-path veto decreases $j$'s outside option, the set of outcomes the designer can implement with on-path vetoes is larger than that without on-path vetoes.

	With informational punishment---that is, a nonconstant $\Sigma$---the designer can relax $j$'s participation constraint without relying on on-path rejections and (possibly) beyond what is implementable with rejections. Veto equilibria always require that vetoes be incentive compatible. In our example above, $i$'s incentive compatibility further restricts the set of attainable mean-preserving spreads over $I^0$. Informational punishment can threaten deviators by implementing \emph{any} mean-preserving spread.

	\begin{proposition}\label{prop:full_participation}
		It is without loss of generality to focus on mechanisms that ensure full participation if informational punishment is available.
	\end{proposition}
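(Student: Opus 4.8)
The plan is to take an arbitrary perfect Bayesian equilibrium of some admissible pair $(\pi,\Sigma)$---in particular one that may feature on-path vetoes, as in \citet{Celik:11}---and to construct a new pair $(\pi',\Sigma')$ possessing a full-participation equilibrium that induces the same decision rule; because the construction preserves the entire induced outcome, it is simultaneously without loss for every designer objective.

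First I would simulate the original outcome with a full-participation mechanism. For each type profile $\theta$, let $V(\theta)$ denote the veto coalition that forms on the path of the original equilibrium at $\theta$, and let $\pi'(\theta)$ be the lottery over $Z$ that the equilibrium induces at $\theta$: the equilibrium play of $\pi$ itself when $V(\theta)=\emptyset$, and the status-quo rule $\mathring{\pi}_{I^{V(\theta)}}$ evaluated at $\theta$ when $V(\theta)\neq\emptyset$. Each $\mathring{\pi}_{I^{V}}\in\mathring{\Pi}\subseteq\Pi$ by Assumption~(i), and the type-contingent splice of $\pi$ with these status-quo rules lies in $\Pi$ by Assumption~(ii), so $\pi'\in\Pi$. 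By the standard revelation argument a type $\theta_i$ who misreports $\hat\theta_i$ to $\pi'$ receives exactly what she would receive in the original equilibrium by mimicking, at every node, the complete strategy of type $\hat\theta_i$ (her message to $\Sigma$, her veto decision, her continuation play), evaluated under her true interim belief; since the original play is a PBE, no such mimicking deviation is profitable, so truthful reporting is incentive compatible inside $\pi'$ under $I^0$. Thus $\pi'$ reproduces the original outcome and inherits all of its incentive constraints whenever everyone participates and reports truthfully.

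Second I would use $\Sigma'$ to restore the participation constraints. Under full participation a unilateral veto by player $j$ is a first-node deviation: no-signaling-what-you-don't-know fixes $j$'s interim belief about $\theta_{-j}$ at $I^0_{-j}(\cdot\mid\theta_j)$, the other players may hold the least favorable off-path belief $q_j$ about $\theta_j$, and $\Sigma'$ then releases $\sigma$. Since Bayes plausibility is the only restriction on what $\Sigma'$ can generate, I would choose it to implement the splitting of the post-veto information structure that minimizes $j$'s continuation value,
\[
\underline{v}_j(\theta_j)\ :=\ \min_{q_j}\ \min_{\tau}\ \mathbb{E}_{I\sim\tau}\big[v_j(\theta_j,I,\mathring{\pi}_I)\big],
\]
the inner minimum running over Bayes-plausible distributions $\tau$ over $\mathcal{I}^0$ compatible with the above; through its effect on $\mathring{\pi}_I$ this internalizes both the distributional and the behavioral channel of $\sigma$ stressed in the introduction. (That a sufficiently rich signal space attains the minimizer is routine.)

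The crux---and the step I expect to be the main obstacle---is the inequality $U_j(\theta_j)\geq\underline{v}_j(\theta_j)$, where $U_j(\theta_j)$ is $\theta_j$'s payoff in the original equilibrium, equivalently her payoff on the full-participation path of $(\pi',\Sigma')$: once this holds, vetoing $\pi'$ is never profitable, and together with the first step full participation with truthful reporting would be a PBE of $(\pi',\Sigma')$ with the desired outcome, never triggering $\Sigma'$ on the path. The argument I would give: vetoing is always available, so in the original equilibrium type $\theta_j$ either faces or can deviate to a post-veto continuation; whatever belief about $j$ and whatever signal realizations that continuation involves, the induced distribution over status-quo information structures respects no-signaling for $j$ and is Bayes-plausible, hence feasible in the double minimum defining $\underline{v}_j(\theta_j)$, so $j$'s payoff there---and thus $U_j(\theta_j)$, by sequential rationality or directly if $\theta_j$ vetoes on path---is at least $\underline{v}_j(\theta_j)$. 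The delicate points are that in our timing an observed veto triggers no further off-path belief cascade (so the relevant beliefs are exactly a single arbitrary $q_j$ followed by a Bayes-plausible split), and that correlation among types neither enlarges the feasible set of post-veto information structures beyond what $\Sigma'$ can generate nor shrinks it below what the original equilibrium generates; it is precisely here that a full-support assumption would be needed in the opportunism variant \cref{prop:IPBP}, though not in the present statement.
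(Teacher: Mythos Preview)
Your approach is correct and shares its engine with the paper's proof: the observation that the post-veto continuation of the original equilibrium is itself a feasible choice of $(q_j,\tau)$, so that sequential rationality of the original PBE yields $U_j(\theta_j)\geq$ (that continuation's value) for every type. The paper, however, does not route the argument through the type-by-type minimum $\underline{v}_j(\theta_j)$. It simply \emph{constructs} $\Sigma'$ to replicate the original veto lottery---each component $\Sigma'_k$ returns $\sigma_k=1$ with exactly the on-path veto probability $\xi_k(\theta_k)$ from the veto equilibrium---and sets the off-path belief on a unilateral deviator equal to the belief she carries after her veto in the original equilibrium. Under that choice both the participation payoff and the veto payoff in $(\pi',\Sigma')$ coincide with their counterparts in the veto equilibrium, so no type can gain from vetoing. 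Your min-max framing is the content of the paper's subsequent ``Optimal Informational Punishment'' subsection rather than of the proof of \cref{prop:full_participation} itself; what it buys is an immediate bridge to the information-design problem, at the cost of a less direct participation argument.

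One wrinkle worth flagging: you write that you would ``choose $\Sigma'$ to implement'' $\underline{v}_j(\theta_j)$, but the minimizing pair $(q_j,\tau)$ in your definition may vary with $\theta_j$, and a single $(q_j,\Sigma')$ need not attain the pointwise infimum simultaneously for every type---this is precisely why the paper's optimal-punishment program carries Lagrange weights $\alpha(\theta_i)$ across types. Your argument does not actually use minimality, only that the original continuation is \emph{one} feasible point, and that feasible point is common to all $\theta_j$. The cleanest repair is therefore to drop the min-max detour and take $\Sigma'$ to be the replicating device, which is exactly the paper's construction and delivers the required inequality uniformly in $\theta_j$.
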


 \begin{proof}
    The proof is constructive. Take any $\pi$, and a \emph{veto equilibrium} in which the mechanism is vetoed with positive probability on the equilibrium path. We first characterize the decision rule of the veto equilibrium. Then, we show it can be implemented with full participation using \emph{informational punishment}.
		 	
    Let $\xi(\theta)$ be the probability that $\pi$ is vetoed given type profile $\theta$. Moreover, $\xi_i(\theta_i)$ is the likelihood that type $\theta_i$ vetoes $\pi$ on the equilibrium path. If players mix regarding their veto decision, the set of players that vetoed, $V$, might be random. Let $\mathscr{V}$ be the set of all sets $V$ which occur with positive probability. After an on-path veto, players then observe $V \in \mathscr{V}$ and update to information structure $I^{V}$. Outcomes realize according to $\mathring{\pi}_{I^{V}}$. Taking expectations over all realizations of $V$, the ex-ante expected continuation game conditional on a veto is a lottery $(P(V), \mathring{\pi}_{I^{V}})$ defined over all $V \in \mathscr{V}$. $P(V)$ is the on-path likelihood that a veto is caused by the set $V$ and not by any other set. Because $\mathring{\Pi} \in \Pi$ and $\Pi$ is closed under convex combinations, the lottery implies $\sum P(V) \mathring{\pi}_{I^{V}} \in \Pi$.

    Conditional on no veto, the information structure is $I^a$, and $\pi_{I^a}$ is the decision rule.
			
    The grand game implements an $I^0$-IC decision rule $\pi'_{I^0}:= \xi \sum P(V) \mathring{\pi}_{I^{V}} + (1-\xi) \pi_{I^a}$. Again, $\pi'\in \Pi$ because $\Pi$ is closed under convex combinations.

    We now construct a signal $\Sigma$ such that the mechanism $\pi'_{I^0}$ is implementable under full participation. By construction, $\pi'$ is feasible and $I^0$-IC. What remains is to show that no player has an incentive to veto $\pi'$.

    We construct the following signaling device $\Sigma_i: \Theta_i \rightarrow \Delta(\{0,1\})$ where $\sigma_i(\theta_i)=1$ with probability $\xi_i(\theta_i)$ and $0$ otherwise. 
    When observing off-path behavior (i.e., a veto) by $i$, $j$ believes that $i$ has randomized uniformly over the entire type-space when reporting to $\Sigma_i$. Thus, she disregards $\sigma_i$. We choose the off-path belief on $i$ identical to the belief attached to $i$ after observing her unilateral veto in the veto equilibrium.

    No player $i$ has an incentive to veto the mechanism. If a player vetoes the mechanism, $\Sigma$ provides her with the same lottery over information structures that she expects from a veto in the veto equilibrium. Participation, in turn, gives the same outcome as the veto equilibrium. No player can improve the outcome of the veto equilibrium by vetoing $\pi'$. 

    Truthful reporting to $\Sigma_i$ is a best response. $\Sigma_i$ is on-path payoff-irrelevant. Thus, under $(\pi', \Sigma)$ an equilibrium with full participation in $\pi'$ exists that implements the same outcome as the veto equilibrium.
\end{proof}

\subsubsection*{Optimal Informational Punishment}

	Suppose the task is to design the optimal mechanism under some objective. \Cref{prop:full_participation} shows that a mechanism with informational punishment and full participation can replicate the outcome of any equilibrium with on-path rejection. However, it may not be immediately clear from \cref{prop:full_participation} how that property simplifies the mechanism-design problem. Here, we outline the implied simplification. Informational punishment is most effective if the signaling device conditions on the deviator's identity. That is, $\Sigma^V: \times_{j \in N \setminus V} \theta_j   \rightarrow \Delta(S)$ is the signaling device used if the set of vetoing players is $V$. Specifically, $\Sigma^i$ is the mapping from reports to realizations used if (only) $i$ vetoes the mechanism.

	Informational punishment separates the participation problem from the mechanism-design problem. The reason is straightforward in light of the proof of \cref{prop:full_participation}. Informational punishment affects the participation constraints only. Moreover, by \cref{prop:full_participation} it is without loss to restrict attention to optimal mechanisms in which vetoes are off-path events. Therefore, we can limit our attention to unilateral vetoes when constructing the optimal mechanism. 

	What remains is to find the $\Sigma^i$ that punishes deviator $i$ the most. If player $i$ vetoes, all other players hold an off-path belief about player $i$'s type and use information structure $I^i$ to update any signal sent by $\Sigma^i$. Let $P(I)\in [0,1]$ be the probability that information structure $I\in \mathcal I^0$ arises after player $i$ vetoes the mechanism. To solve for the optimal $\Sigma^i$, we can solve for the lottery over information structures $(P(I), I)$ such that $\sum_I P(I)=1$. Formally, we solve the following information-design problem:\footnote{See, for example, \citet{bergemann:16} and references therein for more information on information-design problems.}

	\begin{eqnarray*}		
		\min_{P(I)} \sum_I P(I) \sum_{\theta_i} \alpha(\theta_i) v_i(\theta_i,I,\mathring{\pi}_I) \\
		s.t. \sum_I P(I) I= I^i, \text{and} \\
		I_{i}(\theta_i|\theta_{-i} )=I^{V=i}_{i}(\theta_i|\theta_{-i}).
	\end{eqnarray*}
	Here, $\alpha(\cdot) \in [0,1]$ with $\sum_{\theta_i} \alpha(\theta_i)=1$ are weights corresponding to the Lagrangian multipliers of the binding participation constraints in the mechanism-design problem (see \citet{Jullien00}).\footnote{In many standard environments, it holds that $\alpha(\theta_i)=1$ for some $\theta_i$; that is only the participation constraint of one type binds. Moreover, for many default games, the solution to the information-design problem is independent of $\alpha$. Such a situation occurs in settings in which there is a worst information structure for all types of the deviator. This property holds, in particular, in games that feature strategic complements or strategic substitutes.}
	The first constraint implies that every $I$ results from some feasible signal $\Sigma^i$: the signal is \emph{Bayes plausible.} The second constraint means that $\Sigma^i$ cannot reveal information about the deviator who vetoed the mechanism: the belief about the deviator, $I^{V=i}_{i}(\theta_{i}|\theta_{-i})$, is constant. Players do not observe whether the deviator has deviated only at the ratification stage or before that. Therefore, they attach a single belief to the deviator's type distribution.

	The solution to the information-design problem relaxes player $i$'s participation constraint the most. Thus, we can determine each player's least binding participation constraint, player by player. Once participation constraints under informational punishment are determined, we can design the mechanism taking each player's participation constraint as exogenously given. Using the same arguments as those in the concavification literature \citep{Aumann:95}, optimal informational punishment \emph{convexifies} the deviation payoffs and thus minimizes the gains from a veto.

\subsection{Other Environments} 
\label{sub:other_environments}
	
	This section shows that our results extend straightforwardly to more complex settings. We begin by looking at refined equilibrium concepts. Then we consider informed-principal problems. Finally, we reduce the designer's commitment power. 

\subsubsection*{Refinements}

	\Cref{prop:full_participation} assumes that the designer can freely pick off-path beliefs under the PBE restriction. She can select any first-node off-path belief of the continuation game. Depending on the context and the application, such an equilibrium selection might not be reasonable. Using a refinement could instead make on-path vetoes unavoidable because it limits the designer's equilibrium choice set in the first place.\footnote{\Citet{correia2017trembling} provides additional discussion of this issue and how it may interfere with the design of a mechanism.}

	Our second finding is that \Cref{prop:full_participation} is robust to most common refinements. Specifically, whenever we refine the equilibrium concept according to 

\begin{equation*}
	(\star) \in \lbrace \text{Perfect Sequential Equilibrium, Intuitive Criterion, Ratifiability} \rbrace,
\end{equation*}

	full participation remains optimal.

\begin{proposition}\label{prop:refinement}
	Suppose the solution concept is perfect Bayesian equilibrium with refinement concept $(\star)$, and informational punishment is available. Then, focusing on mechanisms that imply full participation is without loss of generality.
\end{proposition}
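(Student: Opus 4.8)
The plan is to run the construction in the proof of \cref{prop:full_participation} and then exhibit off-path beliefs for the resulting full-participation equilibrium that satisfy $(\star)$. So I would start from an equilibrium of some $(\pi,\Sigma_0)$ satisfying PBE together with $(\star)$; if it already has full participation there is nothing to show, so suppose $\pi$ is vetoed on path. Applying the construction of \cref{prop:full_participation} produces a mechanism $(\pi',\Sigma)$ and a candidate full-participation PBE whose on-path decision rule is $\pi'_{I^0}$, the grand-game outcome of the original equilibrium, and whose only first-node off-path beliefs concern a player observed to veto: on $\theta_i$ we retain the belief $I^{V=i}_i$ held after $i$'s unilateral veto in the original equilibrium, while beliefs about the non-deviators are pinned down by Bayes' rule from the realized signal. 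By the model's PBE definition and the no-cascade remark, these ratification-stage first-node beliefs are the only beliefs a refinement in $(\star)$ can constrain; reports inside $\pi'$ are simultaneous and are followed by no further play, so they raise no refinement issue beyond those already present in the components out of which $\pi'$ is assembled---all of which come from the $(\star)$-satisfying original equilibrium.

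The key observation is that the construction preserves, type by type, the two payoff objects on which $(\star)$ operates. First, on-path interim payoffs are unchanged: under $\pi'_{I^0}$ every type $\theta_i$ obtains exactly its payoff in the original equilibrium. Second, the payoff that type $\theta_i$ derives from the (now off-path) deviation ``veto the mechanism''---viewed as a function of the belief the other players hold about $\theta_i$---equals the payoff it would derive from vetoing in the original equilibrium, because after $i$'s veto the device $\Sigma$ reproduces the same lottery over information structures $(P(I),I)$ that a veto triggers there, with $\mathring{\pi}_I$ played under each $I$, and because the belief about the deviator is a free parameter in both environments. Every refinement in $(\star)$ is a test on precisely these first-node beliefs, stated entirely through the comparison of deviation payoffs with equilibrium payoffs: credibility of a belief-induced blocking set for perfect sequential equilibrium \citep{grossman1986perfect}; the equilibrium-domination support restriction together with the ``no surviving type strictly gains'' requirement for the Intuitive Criterion \citep{Cho:87}; consistency of the disagreement belief with optimality of the veto for Ratifiability \citep{Cramton:95}. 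Since this payoff data is identical across the two environments, verifying $(\star)$ for the constructed equilibrium reduces to checking the first-node-belief test at the single new off-path node---the observed veto---which I do next.

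The genuine subtlety, and the step I expect to be the main obstacle, is that the status of the veto differs across the two environments: in the original equilibrium it is (at least partly) on path, so $(\star)$ constrains nothing there, whereas after the construction it is fully off path and $(\star)$ bites. I would resolve this directly. The inherited belief $I^{V=i}_i$ is the Bayesian posterior following an on-path veto, hence is supported on types of $i$ that weakly prefer to veto at that belief; such types are, by definition, not equilibrium-dominated by the veto, so the support restriction shared by all three refinements holds. Moreover no type of $i$ strictly gains from the veto against $I^{V=i}_i$: the on-path vetoing types sit exactly at their equilibrium payoffs, and any type that did not veto on path weakly loses---exactly as in the original equilibrium---so the ``a surviving type would deviate'' clause is not triggered either; for the versions of these refinements that proceed by iterated elimination of types, the same conclusion holds round by round, since each round reads off the identical deviation-versus-equilibrium comparison. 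A last routine check is that reports to the device $\Sigma_i$ add no constraint: on path $\Sigma_i$ is payoff-irrelevant and unobserved, and a veto together with any $\Sigma_i$-report leaves $\theta_i$ with the same continuation as a veto alone, since the construction disregards $\sigma_i$ under the specified off-path belief; so the observed deviation is, for payoff purposes, just ``veto,'' already handled.
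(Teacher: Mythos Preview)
Your proposal is correct and follows essentially the same route as the paper: reuse the construction from \cref{prop:full_participation}, note that on-path payoffs and unilateral-deviation payoffs coincide type by type across the two environments, and conclude that the inherited off-path belief $I^{V=i}_i$ is supported on types that weakly prefer to veto while no such type strictly prefers, which is exactly what perfect sequential equilibrium and the intuitive criterion demand. The one difference is that the paper dispatches Ratifiability in a single line---since Ratifiability \emph{by definition} requires full participation, the claim is vacuous for that refinement and a degenerate $\Sigma$ suffices---whereas you fold Ratifiability into the unified payoff-comparison argument; your extra checks on iterated versions and on $\Sigma_i$-reports are sound but go beyond what the paper spells out.
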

\begin{proof}
				Ratifiability requires full participation in the mechanism and therefore holds trivially, as the designer can always choose a degenerate signaling device. It thus is without loss of generality to show full participation under refinement  $(\star)' \in$ \{Perfect Sequential Equilibrium, Intuitive Criterion\}.
		
				Consider the veto equilibrium used in the proof of \Cref{prop:full_participation}. Suppose this equilibrium satisfies refinement $(\star)'$. 

				We show that the full-participation equilibrium constructed in the proof of \Cref{prop:full_participation}, $(\pi',\Sigma)$, satisfies the same refinement criterion. Two aspects are crucial. First, compare the equilibrium with vetoing and that with full participation. On-path (expected) outcomes and those that are off-path but can be reached by a unilateral deviation are identical between these two equilibria for every state $\theta$. First, take any state $\theta$ in which the mechanism is unanimously accepted in both equilibria. Then both outcomes coincide, and so does the credibility of the beliefs. Second, consider a state $\theta$ in which the mechanism is rejected in the veto equilibrium with positive probability. For the same state, suppose that $\pi'$ is rejected in the full-participation equilibrium---an off-path event. The resulting  \emph{off-path belief} on the deviator $\theta_i$ coincides with the \emph{on-path belief} on the same $\theta_i$ in the veto equilibrium.

				Thus, the constructed off-path beliefs put positive mass only on those types that \emph{weakly prefer to deviate}, while no such type \emph{strictly prefers to deviate}. Thus, any off-path belief for type $\theta_i$ is credible in the sense of \citet{grossman1986perfect}, \emph{and} off-path beliefs do not violate the intuitive criterion.
			\end{proof}

\subsubsection*{Informed-Principal Problems}
	
	The informed-principal environment assumes that a privately informed player proposes the mechanism that should replace the status quo.

	Formally, instead of a nonstrategic third party, one of the players, say $i=0$, proposes a mechanism as an alternative to the default game. The setting becomes an informed-principal problem. Players $i=1,...,N$ are the agents. 

	A key concept to solve informed-principal problems is the concept of inscrutability (see \cite{Myerson:83informed}). It states that it is without loss to assume that the informed principal, player $0$, selects a mechanism that does not allow the other players $1,...., N$ to learn about the principal's type from the proposed mechanism. That is, inscrutability means it is without loss to restrict attention to pooling solutions in which each principal type offers the same mechanism.

	The default game can depend nonlinearly on beliefs about player $0$'s type. Consequently, the principle of inscrutability might fail. Player $0$ may have strict incentives to signal private information via the mechanism proposal. Player $0$ thereby relaxes the other players' participation constraints. The following result states that these concerns are irrelevant if informational punishment is available.

\begin{proposition} \label{prop:Inscrutability}
	The principle of inscrutability holds if informational punishment is available.
\end{proposition}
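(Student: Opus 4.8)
The plan is to extend the constructive argument behind \cref{prop:full_participation} by pooling the principal's \emph{choice of mechanism} on top of the behavior inside it. Fix any PBE of the informed-principal game. For each principal type $\theta_0$, that equilibrium---together with the agents' ratification and continuation play---induces an agents'-IC decision rule $\widetilde\pi^{\theta_0}:\Theta\to\Delta(Z)$; if the mechanism is vetoed on path for some type profiles, $\widetilde\pi^{\theta_0}$ already folds in the corresponding lottery $\sum_V P(V)\mathring\pi_{I^V}$ over default-game outcomes, which lies in $\Pi$ by Assumptions (i) and (ii), exactly as in the proof of \cref{prop:full_participation}. I would then let every principal type offer the \emph{same} grand pair $(\pi^\star,\Sigma^\star)$, where $\pi^\star$ asks player $0$ to report $\theta_0$ and the agents to report $\theta_{-0}$ and implements $\widetilde\pi^{\theta_0}(\theta_0,\theta_{-0})$. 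Since each $\widetilde\pi^{\theta_0}$ is an available decision rule and $\Pi$ is closed under convex combinations with mixing weights that may condition on $\theta_0$, we have $\pi^\star\in\Pi$; and since the offer is type-independent, the proposal is inscrutable.

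Next I would verify that $(\pi^\star,\Sigma^\star)$ supports an equilibrium with the original outcome. \emph{Agents' reporting:} inside $\pi^\star$ agent $j$ holds the prior belief about $\theta_0$, and the prior is the average---over $\theta_0$, weighted by $j$'s interim belief---of the posteriors the on-path offers induced in the original equilibrium; hence $j$'s reporting-incentive inequality under $\pi^\star$ is a nonnegative average of inequalities that held cell-by-cell in the original equilibrium, so truthful reporting stays optimal. \emph{Principal's incentives:} upon any offer other than $(\pi^\star,\Sigma^\star)$, keep exactly the original equilibrium's beliefs about $\theta_0$ and its continuation; because the agents never observe player $0$'s report inside $\pi^\star$, both a deviation to some other offer and a misreport of $\theta_0$ inside $\pi^\star$ become outcome-equivalent to a deviation the original equilibrium already deterred, so offering $(\pi^\star,\Sigma^\star)$ and reporting truthfully is optimal for player $0$.

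The remaining check---and the main obstacle---is agents' participation: an agent $j$ who vetoes $\pi^\star$ now faces the \emph{prior} belief about $\theta_0$, whereas in a separating equilibrium a veto came only after the offer had sharpened that belief, so the two outside options are a priori not comparable. I would resolve this with optimal informational punishment, as in the discussion following \cref{prop:full_participation}: $\Sigma^\star$ uses player $0$'s report (submitted to the device before the veto stage) so that, after a veto by $j$, the released garble induces the Bayes-plausible information structure---with mean equal to the post-veto structure $I^{V=j}$ and $j$'s own type-belief held fixed---that minimizes $j$'s truthful default payoff; call this min-max value $W_j(I^{V=j})$, which is \emph{convex} in its mean. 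In the original equilibrium, $j$'s payoff from a veto following type $\theta_0$'s offer is at least $W_j$ evaluated at that offer's post-veto structure $I^{V=j}_{\theta_0}$, and averaging $I^{V=j}_{\theta_0}$ over $\theta_0$ (conditional on $\theta_j$) returns $I^{V=j}$---a martingale/law-of-iterated-expectations step, immediate under independent types and otherwise carried out through the expansion calculus of \cref{sub:discussion_of_the_modeling_assumptions}. Writing $\bar v_j(\theta_j)$ for $j$'s unchanged on-path payoff, Jensen's inequality then gives
\[
\bar v_j(\theta_j)\ \ge\ \mathbb{E}_{\theta_0\mid\theta_j}\big[W_j(I^{V=j}_{\theta_0})\big]\ \ge\ W_j\big(\mathbb{E}_{\theta_0\mid\theta_j}\,I^{V=j}_{\theta_0}\big)\ =\ W_j(I^{V=j}),
\]
so $j$'s veto payoff under $(\pi^\star,\Sigma^\star)$ does not exceed her on-path payoff and full participation is incentive compatible. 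Combining the three checks, $(\pi^\star,\Sigma^\star)$ is an equilibrium that reproduces the original outcome with a type-independent proposal, establishing the principle of inscrutability.
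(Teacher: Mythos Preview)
Your argument is correct, but it takes a more elaborate route than the paper's. Both proofs begin identically: fold the separate-and-veto equilibrium into a single $I^0$-IC rule $\bar\pi_{I^0}\in\Pi$ (using Assumptions (i) and (ii)), have every principal type offer it, and reuse the original equilibrium's off-path continuation after any other offer to discipline player~$0$.

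The difference is in how participation of the agents is secured. The paper does \emph{not} invoke optimal informational punishment or convexity. Instead, it builds $\Sigma$ so as to \emph{replicate} the exact information lottery a vetoing agent would have faced in the original equilibrium: $\Sigma_0(\theta_0)$ draws a label $o(\pi')$ with probability $\xi_0^{\pi'}(\theta_0)$, mimicking the principal's mechanism choice, and each $\Sigma_i(\theta_i)$ mimics agent $i$'s veto decision. Hence a deviating agent $j$ sees precisely the lottery over information structures she would have seen in the separate-and-veto equilibrium, so her veto payoff is unchanged and the participation check is immediate---no Jensen step, no need to argue that $W_j$ is convex, and no martingale identity $\mathbb{E}_{\theta_0\mid\theta_j}[I^{V=j}_{\theta_0}]=I^{V=j}$ to verify.

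Your route via the min-max value $W_j$ and Jensen is sound (convexity of $W_j$ is just the lower convex envelope of $I\mapsto v_j(\theta_j,I,\mathring\pi_I)$), and it delivers the slightly stronger observation that the \emph{optimal} punishment suffices, not merely the replicating one. But two bookkeeping points are worth tightening: (a) $W_j$ should be indexed by $\theta_j$, since the min-max payoff is type-specific; and (b) the martingale step requires the off-path belief on the deviator $j$ in the pooled game to equal the appropriate average of the (possibly offer-dependent) beliefs on $j$ in the original equilibrium---otherwise the $j$-marginals of $I^{V=j}_{\theta_0}$ need not integrate to that of $I^{V=j}$. The paper's replication construction sidesteps both issues entirely.
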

			\begin{proof}
				Consider an equilibrium of the grand game such that different types of player $0$ propose different mechanisms. Let $\mathcal{M}$ be the set of $\pi$s that are proposed with strictly positive probability. Let $\xi^{ \pi}_0(\theta_0)$ denote the probability that player 0 type $\theta_0$ proposes mechanism  $ \pi  \in \mathcal{M}$.
			
				Consider the case in which at least one type of one player vetoes some $\hat{\pi}   \in \mathcal{M}$ on the equilibrium path. We refer to this equilibrium as the separate-and-veto equilibrium. Recall that if $\hat{\pi}$ is vetoed, some rule in $\mathring{\Pi}$ results. Let the probability that $ \hat{\pi}$ is vetoed be $\xi^{\hat{\pi}}$. Moreover, $\xi^{\hat{\pi}}_i(\theta_i)$ is the probability that $\theta_i$ vetoes $\hat{\pi}$. The separate-and-veto equilibrium implements an $I^0$-IC decision rule, $\overline{\pi}_{I^0}$. Because $\Pi$ is closed under convex combinations $\overline{\pi}_{I^0} \in \Pi$.

		  		We prove the existence of the following equilibrium. All types of player $0$ propose $\overline{\pi}_{I^0} $ and every player accepts it. This equilibrium leads to the $I^0$-IC decision rule $\overline{\pi}_{I^0}$. We construct a signaling device $\Sigma$ to support acceptance of $\overline{\pi}_{I^0} $.  Let $o: \mathcal{M} \rightarrow \mathbb{R}$ be some invertible function. For $i=0$, we construct the signal $\Sigma_{0}(\theta_0)$ with support $ \lbrace  o(\pi') \rbrace_{ \pi' \in \mathcal{M}}$ and associated probabilities $Pr(\pi'|\theta_0)=\xi^{\pi'}_0(\theta_0)$. For any $i >0$, let the signal be $\Sigma_{i}(\theta_i) = 1$ with probability $ \xi_i(\theta_i)$ and $\Sigma_{i}(\theta_i) = 0$ with remaining probability. Whenever player $i>0$ vetoes, a signal realizes according to $\Sigma $. Thus, the reason why no player rejects $\overline{\pi}$ is the same as in the proof of \Cref{prop:full_participation}. The only difference is that the signaling function also replicates the potential signal-by-mechanism-choice behavior of the principal, captured by $\Sigma_0$.
			\end{proof}
\subsubsection*{Informational Opportunism}

	Our design approach on the side of the signal assumes that the signaling device $\Sigma$ is fixed. In other words, the signal designer has commitment power. However, if the signal is created through passing information to a third player---for example, a journalist---then that assumption does not always hold. In such a setting, the person that keeps the information may be a strategic player. In addition, she could be unable to both commit to the signaling device and report its outcome. \Citet{MartimortDequiedt15} refer to such a setting as \emph{informational opportunism}.

	To adequately address the situation with informational opportunism, we need to take a stance on two aspects. First, how large is the signal designer's commitment power? Second, does the designer's objective change once she sees a deviation? Is it credible for her to use the information to punish the deviator? 

	An extreme form of informational opportunism occurs if the signal designer chooses the \emph{signal realization}, $\sigma$, after receiving a report rather than the \emph{signaling device}, $\Sigma$. In that case, the signal cannot commit to a mapping from reports to realizations. Instead, the signal becomes a cheap-talk announcement. The results of \Cref{prop:refinement,prop:full_participation,prop:Inscrutability} trivially do not hold without any commitment power.

	Instead, we focus on a signal designer whose action space is still the set of signaling devices. We make the following assumption. 

\begin{assumption}[No Fabricated Data]\label{ass:nofab}
	The choice of signaling device, $\Sigma$, becomes public together with its realization, $\sigma$.
\end{assumption}

	Under \Cref{ass:nofab} the signal designer can back up her claims by providing evidence. Indeed, all interested players can see the designer's method to reach her conclusion, $\sigma$. Within this setting, informational opportunism is best seen when considering the timing of the grand game. In our baseline model, the signal designer commits to $\Sigma$ \emph{before} players decide about vetoing. She is thus an impartial third player and not an interested player in the grand game. In contrast, we assume now that the designer is an interested player in the game. 

	First, we assume that the signal designer can commit to an objective. However, she \emph{cannot} commit to her signaling device at the beginning of the game. Instead, she picks $\Sigma$ after players have made their acceptance decision and the designer has elicited the information. That is, the signal designer faces \emph{ex post incentive constraints.}

\begin{definition}[Informational Opportunism]
	The designer of the signaling device suffers from \emph{informational opportunism} if she cannot commit to a signaling device $\Sigma$ before the players' participation decision.
\end{definition}

	We are interested in situations in which our previous results are \emph{immune} to informational opportunism.

\begin{definition}[Immunity]
	A result is \emph{immune to informational opportunism} if it is implementable by a signal designer that suffers from \emph{informational opportunism}.
\end{definition} 

	Allowing for informational opportunism comes at a cost. We cannot make a general statement on immunity. However, we state a set of definitions that restricts the environment. These restrictions allow us to state \Cref{prop:IPBP}, which determines a condition for immunity.

	We identify a signal designer by her \emph{type}. The designer's type is the information she has elicited from the (participating) players.
	Observe that any signal designer \emph{could} fully reveal her type by choosing the appropriate signaling device.

	Yet if the signal designer chooses not to reveal her type, the interpretation of realization $\sigma$ does not depend only on $\Sigma$. It also depends on the belief that players form about the signal designer. To form the belief, players observe $\Sigma$. Each $\Sigma$ triggers a belief that, together with $\Sigma$, leads to a lottery over information structures. Different designer types potentially have different preference rankings over lotteries for a given signal designer's objective. We assume, however, that preferences are aligned, and all types share the same order. Thus, there is a common understanding of which information structures better achieve the desired goal.

	To achieve immunity to informational punishment, we impose two properties on the environment. First, we assume \emph{full support} of the type distribution, $I^0>0$.
	Second, the signal designer has \emph{aligned preferences} across her types.
\begin{definition}[Aligned Preferences]
	Fix arbitrary distributions over a collection of $(N-1)$ players' types. Let $F$ and $F'$ be two (possible) distributions of the remaining player $i$'s type. The signal designer has aligned preferences if every designer type prefers $F$ to $F'$ whenever $F$ first-order stochastically dominates $ F'$.
\end{definition}

	Finally, we define an extreme notion of the desire to separate.

\begin{definition}[Unraveling Pressure]\label{def:unrav}
 	A signal designer faces unraveling pressure under signaling device $\Sigma$ if she strictly prefers to verify her type than to engage in the lottery induced by $\Sigma$.
\end{definition}

	Suppose full support and aligned preferences. In that case, the absence of unraveling pressure is necessary and sufficient to guarantee that a signaling device is implementable, as the following proposition shows.
\begin{proposition}\label{prop:IPBP}
 	Suppose $I^0>0$ and the signal designer's preferences are aligned. Then, a signaling device $\Sigma$ is implementable under informational opportunism if and only if no signal-designer type faces unraveling pressure.
\end{proposition}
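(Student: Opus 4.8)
The plan is to read the informationally-opportunistic signal designer's problem as a verifiable-disclosure game. Her \emph{type} is the information she has elicited (a profile in $\Theta_{-i}$), her action is the choice of device $\Sigma'$, and the payoff-relevant consequence of $(\Sigma',\sigma')$ is the players' posterior, which—under \Cref{ass:nofab}—they form from both the device and its realization. Because $|S_i|\ge|\Theta_i|$, the designer can always pick a device that is injective in her type; call the payoff she thereby secures her \emph{verification payoff} $\bar v(\theta)$, i.e. the payoff under the degenerate posterior that pins her type down. The target object is the pooling equilibrium in which every designer type chooses the given $\Sigma$, after which the realization $\sigma$ occurs and players update by Bayes' rule.

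\emph{Necessity.} Suppose some type $\theta$ faces unraveling pressure (\Cref{def:unrav}), i.e. $\bar v(\theta)$ strictly exceeds her payoff from the lottery $\Sigma$ induces for her. In any candidate equilibrium in which all types pool on $\Sigma$, type $\theta$ can profitably deviate to an injective device $\Sigma'$: by \Cref{ass:nofab} players observe $(\Sigma',\sigma')$ with $\sigma'$ conclusively identifying $\theta$, so the continuation belief is degenerate at $\theta$ regardless of any off-path belief, and $\theta$ obtains exactly $\bar v(\theta)$. Hence no such equilibrium exists and $\Sigma$ is not implementable. Full support $I^0>0$ is what guarantees the resulting degenerate information structure lies in $\mathcal I^0$, so that the continuation default game still has the postulated equilibrium.

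\emph{Sufficiency.} Suppose no type faces unraveling pressure, so the $\Sigma$-lottery pays type $\theta$ at least $\bar v(\theta)$ for every $\theta$. Build the pooling equilibrium as follows: on path, all types play $\Sigma$ and players update by Bayes' rule from $(\Sigma,\sigma)$; off path, after any $\Sigma'\neq\Sigma$ and realization $\sigma'$, players hold the \emph{pessimistic} belief—supported on the types consistent with $\sigma'$ under $\Sigma'$, it is the one those types unanimously rank lowest, obtained by taking a coordinatewise first-order stochastically lowest feasible marginal on each player's type. Since the deviating type $\theta$ is itself consistent with $(\Sigma',\sigma')$, this belief is, by the aligned-preferences hypothesis together with first-order stochastic dominance, weakly worse for $\theta$ than revealing $\theta$; hence her payoff from realization $\sigma'$ is at most $\bar v(\theta)$, and, averaging over the realizations $\Sigma'$ can produce, her total deviation payoff is at most $\bar v(\theta)$, which is at most her $\Sigma$-payoff. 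So $\Sigma$ is a best response for every type, the pessimistic beliefs form an admissible PBE belief system (supported on consistent types, and—again using $I^0>0$—inducing information structures in $\mathcal I^0$), and the pooling equilibrium exists, so $\Sigma$ is implementable.

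\emph{Main obstacle.} The delicate step is the claim in sufficiency that the pessimistic off-path belief can never beat verification for the deviating type. Two points need care. First, PBE-admissibility of those beliefs is exactly where full support $I^0>0$ is used, so that the pessimistic information structures are genuine expansions of $I^0$ and the continuation default game is well defined; without it the expansion notion is too weak to pin down off-path play. Second, when a realization $\sigma'$ is consistent with several designer types, one must exhibit a \emph{single} belief on that set that is simultaneously no better than $\bar v(\theta)$ for each consistent $\theta$; this forces the comparison to be carried out in the coordinatewise-FOSD language in which aligned preferences are stated, chaining it coordinate by coordinate against suitable extreme points of the consistent set, and verifying that such a dominating belief always exists is the crux where the precise form of the aligned-preferences assumption is indispensable. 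Once these are in place, the equivalence with ``no unraveling pressure'' is immediate.
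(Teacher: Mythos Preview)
Your proof follows essentially the same route as the paper's: necessity is read off \cref{def:unrav}, and sufficiency builds a pooling PBE in which punishing off-path beliefs cap any non-verifying deviation at the verification payoff $\bar v(\theta)$, which the no-unraveling hypothesis then bounds by the $\Sigma$-payoff. The one difference is that the paper takes the off-path belief after a non-verifying $(\Sigma',\sigma')$ to be \emph{degenerate} on a single type $\tilde\theta$ in the consistent set $\Theta^{\sigma'}$ rather than your coordinatewise-FOSD-lowest distribution---this sidesteps the joint-feasibility worry you correctly flag in your final paragraph, since a point mass is automatically a valid belief on $\Theta^{\sigma'}$.
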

	\begin{proof}
		The ``only if'' direction follows from \cref{def:unrav}. If a signal designer type faces unraveling pressure, she prefers to reveal her type over the signaling device $\Sigma$.

		For the ``if'' direction, consider a signaling device $\Sigma$. Assume player $i$ has vetoed the mechanism. The signal designer elicited the information $\theta_{-i}$ from the non-deviating players. We want to show that no designer type, $\theta_{-i}$, has an incentive to announce a different device than $\Sigma$.

		Suppose signal designer type $\theta_{-i}$ deviates by announcing $\Sigma'$ which does not verify $\theta_{-i}$. Players observe the deviation $\Sigma'$ and its realization $\sigma'$. Using these objects, they form off-path beliefs about the types of all $N-1$ players. The symmetry of PBE and the full-support assumption imply the following. Any subset of players has identical beliefs about those not in that subset.

		The off-path beliefs on the signal designer's type are only restricted by the signaling function $\Sigma'$. If a realization $\sigma'$ occurs with probability $0$ given a type $\theta_{-i}$, then players exclude that type from the set of possible signal designer types. Denote the set of not excluded types by $\Theta^{\sigma'}$. The distribution $F^{\sigma'}: \Theta^{\sigma'} \rightarrow [0,1]$ is arbitrary. That is, for every $\Sigma'$, there always exists an off-path belief about the deviating designer that rationalizes $F^{\sigma'}$.

		By assumption $\Sigma'$ does not verify $\theta_{-i}$. Thus, $|\Theta^{\sigma'}|>1$. Types have aligned preferences. Thus, we can find a signal designer type $\tilde{\theta}$ such that a degenerate belief on $\tilde{\theta}$ makes every designer type other than $\tilde{\theta}$ worse off compared to that signal designer revealing her type. No unraveling pressure implies that no type benefits from the deviation. $\Sigma$ is implementable under informational opportunism.
	\end{proof} 
    The immunity of our results to informational opportunism is a straightforward corollary to \cref{prop:IPBP}. Moreover, it provides a simple way to test for immunity given a candidate $\Sigma$.

\begin{corollary}
	Suppose players' types are independently distributed and the designer's preferences are aligned.  \Cref{prop:full_participation,prop:refinement,prop:Inscrutability} are immune to informational opportunism if no signal-designer type faces unraveling pressure under $\Sigma$.
\end{corollary}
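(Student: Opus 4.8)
The plan is to obtain the corollary as a direct application of \Cref{prop:IPBP}. The only hypothesis of \Cref{prop:IPBP} not assumed verbatim in the corollary is the full-support condition $I^0>0$, so the first step is to note that independence supplies it: the standing assumptions give $supp(I^0_i)=\Theta_i$ for every $i$, and under independence $I^0(\theta)=\prod_i I^0_i(\theta_i)$, hence $I^0(\theta)>0$ for every $\theta\in\Theta=\times_i\Theta_i$. Aligned preferences are assumed outright in the corollary. Thus both hypotheses of \Cref{prop:IPBP} hold in the environment of the corollary.

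Next I would recall precisely what immunity requires and what the proofs of \Cref{prop:full_participation,prop:refinement,prop:Inscrutability} deliver. Each of those proofs exhibits an explicit signaling device $\Sigma$ and shows that the claimed conclusion is supported once this device is in place: a product of independent, type-indexed randomizations $\Sigma_i$ (with $\sigma_i=1$ with probability $\xi_i(\theta_i)$ off path), the same device being reused in the proof of \Cref{prop:refinement}, and in \Cref{prop:Inscrutability} additionally the component $\Sigma_0$ that mirrors the principal's proposal randomization through the invertible relabeling $o(\cdot)$. By the definition of immunity to informational opportunism, the corresponding result is immune exactly when this $\Sigma$ remains implementable when the signal designer picks her device only after observing the veto and eliciting the information.

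The final step is to invoke \Cref{prop:IPBP} for that device: with full support (now guaranteed by independence) and aligned preferences, $\Sigma$ is implementable under informational opportunism if and only if no signal-designer type faces unraveling pressure under $\Sigma$. The corollary's hypothesis is precisely that no signal-designer type faces unraveling pressure under $\Sigma$, so the ``if'' direction of \Cref{prop:IPBP} yields that $\Sigma$ is implementable under informational opportunism; hence \Cref{prop:full_participation,prop:refinement,prop:Inscrutability} are each immune.

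I expect the only thing that genuinely needs checking---the main obstacle---is that the devices appearing in those three proofs really fit the template analyzed in \Cref{prop:IPBP}: a single observed veto by some player $i$, after which the signal designer's ``type'' is the elicited profile $\theta_{-i}$ and the off-path beliefs on non-deviators inherit the product/full-support structure used in that proof. For \Cref{prop:full_participation} this is immediate. For \Cref{prop:refinement} one uses that the refinements $(\star)'$ restrict only first-node and on-path beliefs, which $\Sigma$ leaves untouched, so the opportunism argument is unchanged. For \Cref{prop:Inscrutability} one notes that the extra component $\Sigma_0$ is just another independent, type-indexed randomization, so that the full-support and aligned-preferences reasoning of \Cref{prop:IPBP}---symmetry of PBE beliefs across subsets of players, arbitrariness of the off-path belief $F^{\sigma'}$ on non-verified types, and existence of a pooling target type $\tilde\theta$---applies verbatim to the augmented device.
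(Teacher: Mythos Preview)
Your proposal is correct and follows the paper's own route: the paper simply states that the corollary is a ``straightforward corollary to \Cref{prop:IPBP}'' and gives no separate proof, and you flesh this out by (i) observing that independence together with the standing assumption $supp(I^0_i)=\Theta_i$ yields the full-support hypothesis $I^0>0$ of \Cref{prop:IPBP}, and (ii) applying the ``if'' direction of \Cref{prop:IPBP} to the devices $\Sigma$ constructed in the proofs of \Cref{prop:full_participation,prop:refinement,prop:Inscrutability}. Your additional discussion of why those particular devices fit the template of \Cref{prop:IPBP} is more than the paper provides, but it is consistent with the intended argument rather than a different one.
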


	An essential requirement of Proposition~\ref{prop:IPBP} is that the signal designer's preferences be aligned. If preferences are misaligned, informational opportunism can eliminate the power of informational punishment completely. We show this using the following example.

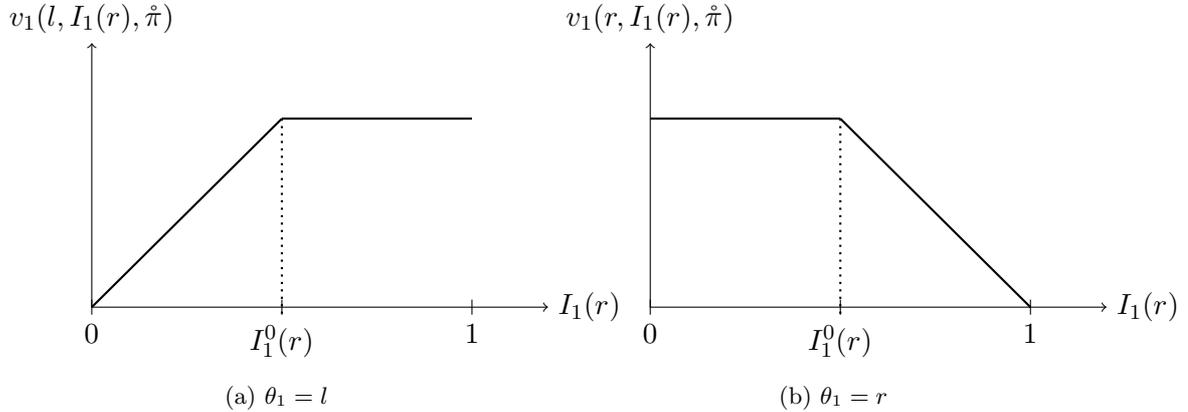
\begin{figure}[thb]
    \centering
    \begin{subfigure}[t]{.5\textwidth}
      	
\begin{tikzpicture}[scale=5]
\node [above] at (0,.7) {$v_1 (l, I_1(r), \mathring{\pi}  )$};
\node [right] at (1.2,-.0) {$  I_1(r)$};

\draw [<->] (0,.7) -- (0,0) -- (1.2,0);

\draw (0,0.02)--(0,-0.02) node[below] {$0$};
\draw (1,0.02)--(1,-0.02) node[below] {$1$};
\draw (0.5,0.02)--(0.5,-0.02);
\draw[dotted, thick] (0.5,0.5) -- (0.5,-0.02) node[below] {$I_1^0(r)$};

\draw[thick] (0,0)--(0.5,0.5);   
\draw[thick] (0.5,0.5)--(1,0.5);

\end{tikzpicture} %
     \caption{$\theta_1=l$}
    \end{subfigure}%
    \begin{subfigure}[t]{.5\textwidth}
          
\begin{tikzpicture}[scale=5]
\node [above] at (0,.7) {$v_1(r, I_1(r),\mathring{\pi}  )$};
\node [right] at (1.2,-.0) {$  I_1(r) $};

\draw [<->] (0,.7) -- (0,0) -- (1.2,0);

\draw (0,0.02)--(0,-0.02) node[below] {$0$};
\draw (1,0.02)--(1,-0.02) node[below] {$1$};
\draw (0.5,0.02)--(0.5,-0.02);
\draw[dotted, thick] (0.5,0.5) -- (0.5,-0.02) node[below] {$I_1^0(r)$};

\draw[thick] (0,0.5)--(0.5,0.5);   
\draw[thick] (0.5,0.5)--(1,0);

\end{tikzpicture} 
 %
        \caption{$\theta_1=r$}
    \end{subfigure}%
    \caption{The left (right) panel depicts $\theta_1=l$'s ($r$'s) payoff from the noncooperative interaction with player 2 in the default game under belief $\beta$.}
    \label{fig:Player  1}
\end{figure}

\begin{example}
	There are two players, 1 and 2, and only player 1 has private information $\theta_1 \in    \lbrace l, r\rbrace$. We denote player 2's belief about $\theta_1=r$ by $I_1(r) $ with commonly known prior $I_1^0(r)$. Let the expected payoff from playing the status quo under belief $I_1(r)$ be $v_1(\theta_1,I_1(r),\mathring{\pi})$ for player 1 and $v_2(I_1(r),\mathring{\pi} )$ for player $2$. 
	\cref{fig:Player  1} depicts player 1's payoff resulting from their interaction in the default game for an arbitrary belief $I_1(r)$. Both types of player 1 benefit if player 2 mistakenly believes her to be the opposite type. Preferences are thus misaligned.\footnote{The situation is a reduced-form description of a game in which player 1 is an incumbent firm that privately knows which technology, $\theta_1$, is demanded. Player 2 joins the market and decides how to design a competing product. This choice involves  $a_2 \in [0,1]$, which determines player 2's technology $\theta_2 := a_2 r + (1-a_2)l$. The closer player 2's technology is to that of player 1, the fiercer the competition and the lower the rents. A monopolist receives rents of 0.5. }

	For simplicity, we assume that the designer's optimal mechanism is such that player 2's outside option is binding. Moreover, we consider a decentralized implementation of informational punishment. That is, player 1 sets up the signaling device $\Sigma$ herself. \Cref{fig:Player  2} plots player 2's payoffs and shows that an optimal signaling device, $\Sigma^*$, decreases player 2's outside option down to zero. Indeed, if $Pr(\sigma_r|r,\Sigma^*)=1$ and $Pr(\sigma_l|l,\Sigma^*)=1$, it holds that player 2's expected payoff is $I_1^0(r) v_2(1,\mathring{\pi} )+(1-I_1^0(r))v_2(0, \mathring{\pi} )=0$. No type of player 1 is subject to unraveling pressure, as $\Sigma^*$ is fully revealing.

	If player 1 commits to $\Sigma^*$ before observing a veto by player 2, player 2 joins the mechanism as long as she receives a non-negative expected payoff. Now suppose that player 1, in contrast, submits $\Sigma^*$ \emph{after} having learned that player 2 deviates. In that situation, it is not an equilibrium of the continuation game that player 1 submits $\Sigma^*$. 

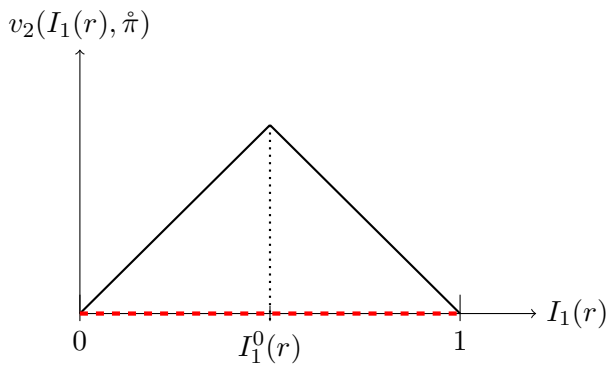
\begin{figure}[bht]
    \centering
      
\begin{tikzpicture}[scale=5]
\node [above] at (0,.7) {$v_2(I_1(r),\mathring{\pi}  )$};
\node [right] at (1.2,-.0) {$ I_1(r) $};

\draw [<->] (0,.7) -- (0,0) -- (1.2,0);

\draw (0,0.05)--(0,-0.02) node[below] {$0$};
\draw (1,0.05)--(1,-0.02) node[below] {$1$};
\draw (0.5,0.02)--(0.5,-0.02);
\draw[dotted, thick] (0.5,0.5) -- (0.5,-0.02) node[below] {$I_1^0(r)$};

\draw[thick] (0,0)--(0.5,0.5);   
\draw[thick] (0.5,0.5)--(1,0);
\draw[ultra thick, red, dashed] (0,0)--(1,0);

\end{tikzpicture} 
    \caption{The solid line depicts player 2's payoff from the default game under belief $I_1(r)$. The dashed line depicts player 2's expected payoff under the payoff-minimizing $\Sigma^*$, which fully reveals player 1's type.  }
    \label{fig:Player  2}
\end{figure}

	To see this, suppose---for a contradiction---it was  and consider the following deviation to a fully uninformative signal: player 1 chooses a ``babbling'' signaling device that sends the same signal $\sigma'$ independent of her type. To make that deviation unprofitable for $\theta_1=l$, we need to ensure that player 2's off-path belief after that deviation, $I_1^{\emptyset}(r)$, equals $0$. However, under that off-path belief, the initial deviation is profitable for type $\theta_1=r$. This is a contradiction.

	Thus, the threat that player 2 will learn player 1's type when vetoing the mechanism  (that is, $\Sigma^*$ is active) is not credible if player 1 cannot commit to $\Sigma^*$ before learning about player 2's veto. As a consequence, player 2 requires a larger outside option than the expected payoff under $\Sigma^*$ to participate in the mechanism. In fact, here, the only signaling device that is immune to informational optimism is the babbling signaling device, leaving player 2 with prior beliefs after a veto.\hfill$\blacktriangle$ 
\end{example}

	The other requirement, full support, allows us to define PBE after the first type of deviation (veto) such that we can account properly for the second type of deviation (designing the signal) within the continuation game.

	We conclude our discussion by addressing a weaker notion of informational opportunism.

	The signal designer may commit to a signaling device $\Sigma$ but may choose to conceal the realization. We refer to this as weak informational opportunism.

\begin{definition}[Weak Informational Opportunism]
	The designer of the signaling function $\Sigma$ suffers from \emph{weak informational opportunism} if she can commit to a signaling function $\Sigma$ at the beginning of the game but not to the disclosure of the realization $\sigma$.
\end{definition}

	It is straightforward to see that immunity to informational opportunism implies immunity to weak informational opportunism. In addition, we can drop the no-unraveling-pressure condition. The reason is that---with aligned preferences---there is a common worst realization $\underline{\sigma}$ across signal-designer types. If a signal designer hides information off the equilibrium path, an off-path belief assuming a (hidden) realization of $\underline{\sigma}$ punishes every designer type the most. Consequently, using the standard unraveling arguments from the persuasion literature \citep[see, for example,][]{Milgrom:81,grossman1981}, no designer has an incentive to hide her information. Moreover, the result applies even if players are unaware of the signal designer's objective.

\begin{corollary}\label{cor:ambiguity}
	Suppose $I^0>0$ and the designer's preferences are aligned. \Cref{prop:full_participation,prop:refinement,prop:Inscrutability} are immune to weak informational opportunism even when players face ambiguity over the signal designer's objective function.
\end{corollary}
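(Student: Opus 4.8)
The plan is to show that the equilibria constructed in the proofs of \cref{prop:full_participation,prop:refinement,prop:Inscrutability} survive the single extra deviation that weak informational opportunism makes available, namely concealing the realization $\sigma$ after it has been drawn. Since the designer still commits to the signaling device $\Sigma$, every step in those proofs concerning on-path play, the induced $I^0$-IC decision rule, and the incentives not to veto is unchanged; it suffices to rule out that, following an (off-path) veto by some player $i$, the signal designer hides $\sigma$ rather than releasing it. Concealment is reached only after a veto, and since no player deviates twice along any path — so there are no off-path belief cascades, which is where the full-support assumption $I^0>0$ is needed to keep PBE after the first deviation well behaved — the beliefs players form about a suppressed realization are unconstrained, and I am free to specify them.

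I would assign the following off-path belief after any concealment: the hidden realization was the common worst realization $\underline{\sigma}$. Aligned preferences guarantee that such a $\underline{\sigma}$ exists, because every designer type ranks the distributions that different realizations induce consistently in the first-order stochastic dominance order; moreover this ranking, being FOSD-monotone, is shared by every admissible objective, so $\underline{\sigma}$ is worst not merely across designer types but across objectives. Under the belief ``the realization was $\underline{\sigma}$,'' concealing is weakly dominated by truthful disclosure for every designer type, and strictly dominated for every type whose true realization differs from $\underline{\sigma}$; the standard unraveling argument \citep[see, for example,][]{Milgrom:81,grossman1981} then forces full disclosure, leaving the constructed equilibrium intact. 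Because the deterring belief does not depend on the objective, the conclusion survives even when players are ignorant of it. The same logic explains why the no-unraveling-pressure hypothesis of \cref{prop:IPBP} can be dropped here: the designer cannot switch to a different device, only hide the realization, and against the worst-case belief hiding is never strictly profitable.

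I expect the main obstacle to be verifying the uniformity just invoked — that aligned preferences deliver a worst realization common to all designer types \emph{and} to all objectives consistent with the FOSD ordering, so that a single off-path belief deters concealment throughout. This is exactly the property that fails in the Example: with misaligned preferences there is no common worst realization, the off-path belief that deters type $l$ of player~$1$ rewards type $r$, and unraveling pushes informational punishment all the way back to the babbling device — confirming both that the hypotheses are tight and where the substantive work sits. The remaining steps (checking that the constructed belief is a legitimate PBE belief given the committed $\Sigma$, and that $\underline{\sigma}$ lies in the support of $\Sigma$ for some designer type) are routine.
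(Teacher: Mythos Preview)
Your proposal is correct and follows essentially the same route as the paper: after an off-path veto, specify the off-path belief that a concealed realization equals the common worst realization $\underline{\sigma}$ guaranteed by aligned preferences, invoke the standard unraveling argument of \citet{Milgrom:81,grossman1981} to conclude that no designer type benefits from hiding, and observe that because the deterring belief is objective-independent the conclusion survives ambiguity over the designer's objective. The paper's own justification is the brief paragraph surrounding the corollary, and your elaboration---on why FOSD-monotonicity makes $\underline{\sigma}$ common across types and objectives, and on the role of full support---merely fleshes out points the paper leaves implicit.
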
 
	The reason for \cref{cor:ambiguity} is that if the signal designer's preferences are aligned, then given any objective, $\Sigma$ has a common worst signal realization. Not revealing the signal realization leads to an arbitrary off-path belief. Thus, even if the designer's objective is unclear, parties can coordinate on an off-path belief in some PBE that puts all probability mass on the worst signal.

\section{Final Remarks} 
\label{sec:conclusion}

	Mechanism design can facilitate good policy making. It provides a simple benchmark that informs our understanding of what is possible theoretically. The power of mechanism design derives from its simplicity in calculations. Invoking the revelation principle, we can derive strong results even in a complex environment. 

	However, suppose the environment is such that the designer cannot control the entire strategic interaction, and parties can block the implementation of the mechanism. In that case, the revelation principle for the part the designer controls can fail. The reason is that parties can use their veto power to signal private information strategically. To obtain the desired benchmarks, we would have to either restrict the environment to settings in which signals through vetoes are nonprofitable or delve into complex case distinctions. 

	We argued that restricting to the setting in which strategic vetoes are of no concern is without loss provided that parties have access to a tool we call \emph{informational punishment}. Informational punishment allows parties to store information for some time and release a garbled version of it in case of a deviation. Furthermore, we show that through informational punishment, we can transform every environment in which strategic vetoes are relevant into an equivalent setting in which they are not. Thus, we can---without loss---restrict ourselves to full-participation mechanisms under an (appropriate) outside option.

	Our results go beyond classical applications of mechanism design. We derive a minimum condition for the available mechanism space that suffices to guarantee full participation at an optimum when informational punishment is available. Informational punishment works off the equilibrium path, does not affect incentive compatibility directly, and allows for publicly verifiable rejections. Informational punishment can be implemented through a centralized signal, through the designer of the mechanism (who could also be an informed principal), or decentralized through the parties individually. Furthermore, informational punishment is robust to various additional constraints on the setting.

\renewcommand{\partname}{}
\renewcommand{\thepart}{}
\begingroup
\setlength\bibitemsep{5pt}
\part{References}
\begin{refcontext}[sorting=nyt]
\printbibliography[heading=none]

@article{Jullien00,
  title={Participation constraints in adverse selection models},
  author={Jullien, Bruno},
  journal={Journal of Economic Theory},
  volume={93},
  number={1},
  pages={1--47},
  year={2000},
  publisher={Elsevier}
}

@Article{Cho:87,
  Title                    = {Signaling Games and Stable Equilibria},
  Author                   = {In-Koo Cho and David M. Kreps},
  Year                     = {1987},
  
  Number                   = {2},
  Pages                    = {179-221},
  Volume                   = {102},

  Journal                  = {Quarterly Journal of Economics},
  Owner                    = {johaschn},
  Timestamp                = {2013.01.18}
}

@Article{Milgrom:81,
  Title                    = {Good News and Bad News: Representation Theorems and Applications},
  Author                   = {Paul R. Milgrom},
  Year                     = {1981},
  
  Number                   = {2},
  Pages                    = {380-391},
  Volume                   = {12},

  Journal                  = {Bell Journal of Economics},
  Owner                    = {johaschn},
  Timestamp                = {2013.01.18}
}

@article{Celik:11,
  title={Equilibrium rejection of a mechanism},
  author={Celik, Gorkem and Peters, Michael},
  journal={Games and Economic Behavior},
  volume={73},
  number={2},
  pages={375--387},
  year={2011},
  publisher={Academic Press}
}

@article{Celik:13,
  title={Reciprocal relationships and mechanism design},
  author={Celik, Gorkem and Peters, Michael},
  journal={Canadian Journal of Economics},
  volume={49},
  number={1},
  pages={374--411},
  year={2016},
  publisher={Wiley Online Library}
}

@article{Forges:93,
  title={Five legitimate definitions of correlated equilibrium in games with incomplete information},
  author={Forges, Fran{\c{c}}oise},
  journal={Theory and decision},
  volume={35},
  number={3},
  pages={277--310},
  year={1993},
  publisher={Springer}
}

@article{Bergemann:13,
  title={Bayes correlated equilibrium and the comparison of information structures},
  author={Bergemann, Dirk and Morris, Stephen},
  year={2016},
  journal={Theoretical Economics},
  pages={487-522}
}

@article{Cramton:95,
  title={Ratifiable mechanisms: learning from disagreement},
  author={Cramton, Peter C and Palfrey, Thomas R},
  journal={Games and Economic Behavior},
  volume={10},
  number={2},
  pages={255--283},
  year={1995},
  publisher={Elsevier}
}

@article{Fudenberg:88,
  title={Perfect Bayesian and Sequential Equilibria - A clarifying Note},
  author={Fudenberg, Drew and Tirole, Jean},
  year={1988},
  journal={mimeo}
}

@article{Myerson:83informed,
  title={Mechanism design by an informed principal},
  author={Myerson, Roger B},
  journal={Econometrica},
  pages={1767--1797},
  year={1983},
  publisher={JSTOR}
}

@article{Balzer:15,
   title={Persuading to Participate: Coordinating on a Standard},
  author={Balzer, Benjamin and Schneider, Johannes},
  journal={International Journal of Industrial Organization},
  volume={78},
  pages={102764},
  year={2021},
  doi={https://doi.org/10.1016/j.ijindorg.2021.102764}
}

@book{Aumann:95,
  title={Repeated games with incomplete information},
  author={Aumann, Robert J and Michael Maschler},
  year={1995},
  publisher={MIT press},
  location={Cambridge, MA}
}

@article{bergemann:16,
  title={Information Design, Bayesian Persuasion, and Bayes Correlated Equilibrium},
  author={Bergemann, Dirk and Morris, Stephen},
  journal={American Economic Review},
  volume={106},
  number={5},
  pages={586--91},
  year={2016},
  publisher={American Economic Association}
}

@article{Balzer:15a,
  title={Managing a Conflict: Optimal Alternative Dispute Resolution},
  author={Balzer, Benjamin  and Schneider, Johannes},
  journal={Rand Journal of Economics},
  volume={52},
  issue={2},
  pages={415-445},
  year={2021}
  }

@article{correia2017trembling,
  title={Self-rejecting mechanisms},
  author={Correia-da-Silva, Joao},
  year={2020},
  journal={Games and Economic Behaviour},
  pages={434-457},
  volume={104}
}

@book{myerson_game_1991,
	address = {Cambridge, Mass},
	title = {Game theory: analysis of conflict},
	isbn = {9780674341159},
	shorttitle = {Game theory},
	publisher = {Harvard University Press},
	author = {Myerson, Roger B.},
	year = {1991},
	keywords = {Game theory},
}

@article{gerardi2007sequential,
  title={Sequential equilibria in Bayesian games with communication},
  author={Gerardi, Dino and Myerson, Roger B},
  journal={Games and Economic Behavior},
  volume={60},
  number={1},
  pages={104--134},
  year={2007},
  publisher={Elsevier}
}

@article{grossman1986perfect,
  title={Perfect sequential equilibrium},
  author={Grossman, Sanford J and Perry, Motty},
  journal={Journal of Economic Theory},
  volume={39},
  number={1},
  pages={97--119},
  year={1986},
  publisher={Elsevier}
}

@article{sugaya2017revelation,
  title={Revelation Principles in Multistage Games},
  author={Sugaya, Takuo and Wolitzky, Alexander},
  year={2020},
  journal={The Review of Economic Studies},
  pages={1503-1540},
  Volume={88},
  Issue={3}
}

@article{DEQUIEDT2007302,
title = "Efficient collusion in optimal auctions",
journal = "Journal of Economic Theory",
volume = "136",
number = "1",
pages = "302 - 323",
year = "2007",
issn = "0022-0531",
doi = "https://doi.org/10.1016/j.jet.2006.08.003",
url = "http://www.sciencedirect.com/science/article/pii/S002205310600144X",
author = "Vianney Dequiedt",
keywords = "Collusion, Third party, Optimal auction"
}

@Article{MartimortDequiedt15,
  author={Vianney Dequiedt and David Martimort},
  title={{Vertical Contracting with Informational Opportunism}},
  journal={American Economic Review},
  year=2015,
  volume={105},
  number={7},
  pages={2141-2182},
  month={July},
  keywords={},
  doi={},
  abstract={We consider vertical contracting arrangements between a manufacturer and a retailing network when retailers have private information and the organization is run through bilateral contracts. We highlight a new form of informational opportunism arising when the manufacturer manipulates information learned separately in each relationship. We characterize the set of allocations robust to such opportunism by means of simple ex post incentive compatibility constraints. Those constraints limit the manufacturer's ability to use yardstick competition among retailers. They simplify contracts and restore a rent/efficiency trade-off even with correlated information. We show that sell-out contracts are optimal under a wide range of circumstances. (JEL D21, D86, L14, L60, L81)},
  url={https://ideas.repec.org/a/aea/aecrev/v105y2015i7p2141-82.html}
}

@article{grossman1981,
  title={The informational role of warranties and private disclosure about product quality},
  author={Grossman, Sanford J.},
  journal={The Journal of Law and Economics},
  volume={24},
  number={3},
  pages={461--483},
  year={1981},
  publisher={The University of Chicago Press}
}

@article{TAN2007383,
title = {Ratifiability of efficient collusive mechanisms in second-price auctions with participation costs},
journal = {Games and Economic Behavior},
volume = {59},
number = {2},
pages = {383-396},
year = {2007},
issn = {0899-8256},
doi = {https://doi.org/10.1016/j.geb.2006.08.005},
url = {https://www.sciencedirect.com/science/article/pii/S0899825606001072},
author = {Guofu Tan and Okan Yilankaya},
keywords = {Auctions, Collusion, Ratifiability},
abstract = {We investigate whether efficient collusive bidding mechanisms are affected by potential information leakage from bidders' decisions to participate in them within the independent private values setting. We apply the concept of ratifiability introduced by Cramton and Palfrey [Cramton, P.C., Palfrey, T.R., 1995, Ratifiable mechanisms: Learning from disagreement, Games Econ. Behav. 10 (2), 255–283] and show that when the seller uses a second-price auction with participation costs, the standard efficient cartel mechanisms such as pre-auction knockouts analyzed in the literature will not be ratified by cartel members. A high-value bidder benefits from vetoing the cartel mechanism since doing so sends a credible signal that she has high value, which in turn discourages other bidders from participating in the seller's auction.}
}
\end{refcontext}
\endgroup
\end{document}